\documentclass{scrartcl}

\usepackage[a4paper,margin=1in, left=7em]{geometry}
\usepackage{amsmath, amssymb}
\usepackage{mathrsfs}
\usepackage{enumitem}
\usepackage[bibliography=common]{apxproof}
\usepackage{stmaryrd}
\usepackage{graphicx}

\usepackage{mathtools}
\usepackage{hyperref}
\hypersetup{
    colorlinks,
    linkcolor={red!50!black},
    citecolor={blue!50!black},
    urlcolor={blue!30!black}
}

\usepackage[
    type={CC},
    modifier={by},
    version={4.0},
]{doclicense}

\usepackage[capitalise,nameinlink,noabbrev]{cleveref}

\usepackage{xcolor}

\newcommand\calL{\mathcal{L}}

\usepackage{amsthm}
\theoremstyle{plain}
\newtheorem{theorem}{Theorem}[section]

\theoremstyle{definition}
\newtheorem{definition}[theorem]{Definition}
\newtheorem{proposition}[theorem]{Property}

\AddToHook{env/defn/begin}{\crefalias{theorem}{defn}}
\AddToHook{env/proposition/begin}{\crefalias{theorem}{proposition}}
\AddToHook{env/coro/begin}{\crefalias{theorem}{coro}}
\AddToHook{env/lemma/begin}{\crefalias{theorem}{lemma}}
\AddToHook{env/claim/begin}{\crefalias{theorem}{claim}}
\AddToHook{env/exmp/begin}{\crefalias{theorem}{exmp}}
\crefname{subsection}{Subsection}{Subsections}

\title{Locality Testing for NFAs is PSPACE-complete}
\author{Antoine Amarilli$^1$, Mikaël Monet$^1$, Rémi De Pretto$^{1,2}$\\
$^1$: Univ.\ Lille, Inria, CNRS, Centrale Lille, UMR 9189 CRIStAL\\
$^2$: École supérieure de chimie, physique, électronique de Lyon}
\date{}

\begin{document}

\maketitle

\begin{abstract}
  \textbf{Abstract.} The class of local languages is
  a well-known subclass of the regular languages that
  admits many equivalent characterizations. In this
  short note we establish the PSPACE-completeness of
  the problem of determining, given as input a
  nondeterministic finite automaton (NFA) $A$,
  whether the language recognized by~$A$ is local or
  not. This contrasts with the case of deterministic
  finite automata (DFA), for which the problem is
  known to be in PTIME.
\end{abstract}

\section{Introduction}

A language $L\subseteq \Sigma^*$ over some alphabet
$\Sigma$ is \emph{local} if it can be defined exactly
by specifying which letters can start words, which
letters can end words, which factors of size two 
can appear in words, and whether the empty word belongs to~$L$. Equivalently, local languages
can be defined as those recognized by so-called
\emph{local
DFAs}~\cite{biblio:Sak09,pin2010mathematical} or by
\emph{read-once $\varepsilon$NFAs}~\cite{biblio:AGMM25}. They can also be
characterized as the languages that are
\emph{letter-Cartesian}~\cite{biblio:AGMM25,pin2010mathematical}.
Local languages have mostly been studied for their
logical and language-theoretic properties, but they also
have found applications in data management,
e.g., for validating XML documents~\cite{balmin2004incremental}, for studying the resilience problem for
graph query languages~\cite{biblio:AGMM25}, or for
probabilistic query
evaluation~\cite{amarilli2025approximating}.

In this note we are interested in the complexity of
determining whether a language is local. A priori, the
complexity of this problem might depend on
the formalism used to represent the input language. It is
known, for instance, that this task can be solved in
PTIME if the language is represented as a
deterministic finite automaton
(DFA): this is shown in~\cite[Theorem~4.1]{balmin2004incremental} or in~\cite[Proposition 3.5]{biblio:AGMM25}. The goal
of this note is to give a self-contained proof of the fact that, when the input language is given as a
nondeterministic finite automata (NFAs), then problem becomes PSPACE-complete.
Formally, we show:

\begin{theorem}
  \label{thm:intro}
  The following problem is PSPACE-complete: given as
  input an NFA $A$, decide if its language $\calL(A)$ is local.
\end{theorem}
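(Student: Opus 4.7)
My plan is to prove PSPACE membership and PSPACE-hardness separately.

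For the upper bound, I will use the characterization of local languages recalled in the introduction: a language $L$ is local iff it is the unique language determined by the set $P \subseteq \Sigma$ of letters starting some word of $L$, the set $S \subseteq \Sigma$ of letters ending some word of $L$, the set $F \subseteq \Sigma^2$ of length-$2$ factors of $L$, and the bit indicating whether $\epsilon \in L$. Given the input NFA $A$, all four quantities (for $\calL(A)$) can be computed in polynomial time by standard reachability in $A$. I then build a DFA of size $O(|\Sigma|)$ for the local language $L_{\mathrm{loc}}$ that they define; since by construction $\calL(A) \subseteq L_{\mathrm{loc}}$, the language $\calL(A)$ is local iff $L_{\mathrm{loc}} \subseteq \calL(A)$. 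This last test is an instance of NFA containment between a polynomial-size DFA and the input NFA, which is known to be in PSPACE, concluding the upper bound.

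For the lower bound, I will reduce from NFA universality, a canonical PSPACE-hard problem. Given an NFA $C$ over $\Sigma$ with $|\Sigma| \geq 2$, I construct in polynomial time an NFA $A$ with $\calL(A) = \calL(C) \cdot \Sigma^3 \cup \Sigma^{\leq 2}$, where $\Sigma^{\leq 2} = \{\epsilon\} \cup \Sigma \cup \Sigma^2$. The union with $\Sigma^{\leq 2}$ forces the four data for $\calL(A)$ to become ``saturated'', namely $P = \Sigma$, $S = \Sigma$, $F = \Sigma^2$, and $\epsilon \in \calL(A)$, so that the associated local language is all of $\Sigma^*$. Hence $\calL(A)$ is local iff $\calL(A) = \Sigma^*$, and I claim this is in turn equivalent to $\calL(C) = \Sigma^*$: the forward direction is immediate, and conversely, if $\calL(A) = \Sigma^*$ then for any $w \in \Sigma^*$ and arbitrary $\sigma_1, \sigma_2, \sigma_3 \in \Sigma$ the word $w\sigma_1\sigma_2\sigma_3$ has length at least $3$, hence lies in $\calL(A)$ and therefore in $\calL(C) \cdot \Sigma^3$; the length constraint forces the $\calL(C)$-prefix of any such decomposition to equal $w$, so $w \in \calL(C)$.

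I anticipate the main difficulty to lie in engineering the reduction so that the local data of $\calL(A)$ is automatically saturated, making its local closure equal to $\Sigma^*$ and thereby collapsing locality to the universality test $\calL(A) = \Sigma^*$. Without the $\Sigma^{\leq 2}$ padding, the local closure of $\calL(A)$ would depend subtly on the behaviour of $C$ and locality would not reduce cleanly to any standard PSPACE-hard problem; the concatenation with $\Sigma^3$ is then chosen to preserve enough of $\calL(C)$ to recover the universality question from the locality test via length bookkeeping.
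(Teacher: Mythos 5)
Your proof is correct, and the two halves compare differently to the paper. The upper bound is essentially the paper's: the cited \cite[Lemma~B.4]{biblio:AGMM25} produces exactly the kind of ``local closure'' automaton you describe (an NFA $A'$ with $\calL(A)$ local iff $\calL(A')\subseteq\calL(A)$), and your explicit construction of $L_{\mathrm{loc}}$ from the first letters, last letters, length-2 factors, and the $\varepsilon$-bit, followed by a PSPACE inclusion test, is that same argument spelled out. The lower bound, however, is a genuinely different reduction, even though both start from NFA universality. The paper adds fresh delimiter letters and builds $\#_1(a^*\#_2 L + aa\#_2\Sigma^*)\#_3$, then exhibits an explicit violation of the letter-Cartesian condition (\cref{def:cart}) when $L\neq\Sigma^*$; the payoff is that the constructed language is infix-free, which yields the stronger hardness claim of \cref{sec:lower}. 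You instead saturate the local data by taking $\calL(C)\cdot\Sigma^3\cup\Sigma^{\leq 2}$, so that the local closure is forced to be $\Sigma^*$ and locality collapses outright to universality of $\calL(A)$, which your length-bookkeeping argument correctly shows is equivalent to universality of $\calL(C)$. Your route is arguably more economical: it needs no new letters and no appeal to the letter-Cartesian characterization beyond the closure fact already used for the upper bound. What it gives up is the infix-free refinement (your language contains $\Sigma^{\leq 2}$, so it is very far from infix-free); and it does rely, as you note, on universality being PSPACE-hard already for $|\Sigma|\geq 2$, which is standard. For the theorem as stated, your proof is complete.
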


We show that PSPACE-hardness already holds
when $\calL(A)$ is assumed to be \emph{infix-free},
i.e., when no word of the language is a strict infix
of another word of the language.
Such languages have been studied for instance in~\cite{han2006infix}: they are also called \emph{reduced}
in~\cite{biblio:AGMM25}.
We further remark
that the same result holds when the language is
represented as a regular expression. The membership
in PSPACE follows from known results, and we prove
PSPACE-hardness by reducing from the NFA universality
problem, using a technique inspired from Greibach's
theorem~\cite{biblio:Greib68}.

\paragraph*{Structure of the note.} We formally define
local languages in Section~\ref{sec:prelims}, where
we also recall some known properties about them and
explain why the locality problem for NFAs is in
PSPACE. We show the lower bound in
Section~\ref{sec:lower}.

\section{Preliminaries and Problem Statement}
\label{sec:prelims}

\paragraph*{Alphabets, words, languages.} An \emph{alphabet} $\Sigma$ is a
finite set whose elements are called \emph{letters}. A \emph{word}
over~$\Sigma$ is a finite sequence of letters.
We write 
$\varepsilon$
for the empty word.
The set of words over the alphabet~$\Sigma$ is
denoted $\Sigma^*$,
and a \emph{language} is a subset of $\Sigma^*$. 
For two words $w, w' \in \Sigma^*$ we write $w w'$ for their concatenation.

For $w \in \Sigma^*$, a word $u \in \Sigma^*$ is an \emph{infix} of $w$ if
there exist $v, v' \in \Sigma^*$ such that $vuv' = w$. It is a \emph{strict}
infix if furthermore we have $vv' \neq \varepsilon$. We say that a language $L$ is
\emph{infix-free} if for every word $w \in L$, there is no strict infix of $w$
that belongs to $L$. 

\paragraph*{Regular languages and automata.} We
assume that the reader is familiar with 
regular languages, regular expressions, and
(deterministic or nondeterministic) finite automata.
If $A$ is such an
automaton, we denote by $\calL(A)$ the language it
recognizes.
We use the standard notions of language concatenation, language union (denoted
$+$), and Kleene star.

\paragraph*{Locality.}
Local languages were introduced in
\cite{biblio:MP71}. A \emph{local} language $L$ over
an alphabet $\Sigma$ is a language defined by three
finite sets $R$, $S$, and $F$, where $R$ is the set of
the first letters, $S$ is the set of last letters and
$F$ is the set of infixes of length 2 that are not
allowed. Additionally, the empty word may belong to~$L$ or not.
So, a non-empty word belonging to the language~$L$
begins with a letter of $R$, finishes with a letter
of $S$ and contains no infix of length 2 in $F$.
Formally, $L \setminus \{\varepsilon\} = (R\Sigma^* \cap \Sigma^*S) \setminus
\Sigma^*F\Sigma^*$. Local languages have many other
characterizations, and in this note we will use the
following:

\begin{definition}
  \label{def:cart}
  A language $L$ is \emph{letter-Cartesian} if the
  following implication holds: for every letter $x
  \in \Sigma$ and words $\alpha, \beta, \gamma,
  \delta \in \Sigma^*$, if $\alpha x \beta \in L$ and
  $\gamma x \delta \in L$, then $\alpha x \delta \in
  L$. 
\end{definition}

Then we have:

\begin{proposition}
  \label{prp:equiv}
A language is local if, and only if, it is
letter-Cartesian.
\end{proposition}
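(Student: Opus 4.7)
The plan is to prove the two implications separately. For the easy direction, assume $L$ is local, defined by sets $R,S,F$ and a flag for $\varepsilon$, and suppose $\alpha x \beta \in L$ and $\gamma x \delta \in L$; I want to show $\alpha x \delta \in L$. This word is nonempty (it contains $x$), so I only need to verify the three conditions. The first letter of $\alpha x \delta$ is either the first letter of $\alpha$ or $x$ (when $\alpha=\varepsilon$), and in both cases it coincides with the first letter of $\alpha x \beta$, hence lies in $R$; symmetrically the last letter lies in $S$ by looking at $\gamma x \delta$. Any length-$2$ infix of $\alpha x \delta$ either lies entirely inside $\alpha x$, in which case it is a length-$2$ infix of $\alpha x \beta$ and so avoids $F$, or lies entirely inside $x \delta$, and so avoids $F$ because it is an infix of $\gamma x \delta$.

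For the converse, assume $L$ is letter-Cartesian and define canonical parameters from $L$: let $R$ (resp.\ $S$) be the set of first (resp.\ last) letters of nonempty words of $L$, and let $F$ be the set of length-$2$ words $xy$ that do \emph{not} appear as an infix of any word of $L$. Let $L'$ be the local language defined by $(R,S,F)$, with $\varepsilon \in L'$ iff $\varepsilon \in L$. The inclusion $L \subseteq L'$ is immediate from the definitions. For the reverse inclusion, fix a nonempty word $w = x_1 \cdots x_n \in L'$. For each constraint satisfied by~$w$, pick a witness in~$L$: a word $w_0 = x_1 u_1 \in L$ because $x_1 \in R$, a word $w_n = u_n x_n \in L$ because $x_n \in S$, and for each $1 \leq i < n$ a word $w_i = v_i x_i x_{i+1} v_i' \in L$ because $x_i x_{i+1} \notin F$.

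The main step is then to splice these witnesses together using the letter-Cartesian property, growing a correct prefix of~$w$ one letter at a time. Starting from $w_0 \in L$ and $w_1 \in L$, applying the letter-Cartesian property at the letter $x_1$ yields $x_1 x_2 v_1' \in L$. Applying it next at $x_2$ with $w_2$ gives $x_1 x_2 x_3 v_2' \in L$, and iterating through $w_3, \dots, w_{n-1}$ gives $x_1 \cdots x_n v_{n-1}' \in L$. A final application at $x_n$ using $w_n$ (playing the role of $\gamma x \delta$ with $\delta = \varepsilon$) collapses the tail and yields $x_1 \cdots x_n = w \in L$. The $n=1$ case is handled by directly combining $w_0$ and $w_n$ with $\alpha=\delta=\varepsilon$. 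The only slightly delicate point will be to state the induction cleanly so that the base case $n=1$ and the boundary letters $x_1,x_n$ are handled uniformly with the interior; no deeper obstacle is anticipated.
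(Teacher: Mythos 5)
Your proof is correct. Note that the paper itself does not prove this proposition: it defers to \cite[Chp.\ II, Ex.\ 1.8, (d)]{biblio:Sak09} and to \cite[Proposition B.7]{biblio:AGMM25}, so there is no in-paper argument to compare against; your write-up supplies the standard proof that those references contain. Both directions check out: for local $\Rightarrow$ letter-Cartesian, the key observation that every length-$2$ infix of $\alpha x \delta$ lies entirely within $\alpha x$ or entirely within $x\delta$ (since the occurrence of $x$ belongs to both) is exactly right, and the first/last letter of $\alpha x \delta$ is correctly inherited from $\alpha x\beta$ and $\gamma x \delta$ respectively, including the degenerate cases $\alpha=\varepsilon$ or $\delta=\varepsilon$. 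For the converse, taking the canonical parameters $(R,S,F)$ induced by $L$ gives $L \subseteq L'$ for free, and your splicing argument correctly derives $x_1\cdots x_i v_{i-1}' \in L$ by induction on $i$ and then truncates the tail with the witness for $x_n \in S$; the $n=1$ case works as you describe. The only thing to make explicit when writing this up formally is the invariant of the induction (namely that $x_1 \cdots x_i v_{i-1}' \in L$ after step $i$), but as you anticipate, this is routine and there is no gap.
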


This equivalence is given as an exercise in
\cite[Chp.\ II, Ex.\ 1.8, (d)]{biblio:Sak09}, and a
full proof can be found in~\cite[Proposition
B.7]{biblio:AGMM25}.

We now recall why it is PSPACE to test if the language recognized
by an NFA is local. The following is
again shown in \cite{biblio:AGMM25}:

\begin{proposition}[\mbox{\cite[Lemma B.4]{biblio:AGMM25}}]
Given as input an NFA $A$, we can compute in PTIME
another NFA $A'$ satisfying that $\calL(A)$ is
local if and only if $\calL(A') \subseteq \calL(A)$.
\end{proposition}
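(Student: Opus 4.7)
The plan is to use the letter-Cartesian characterization of local languages from \cref{prp:equiv}: $\calL(A)$ is local if and only if for every letter $x \in \Sigma$ and all words $\alpha, \beta, \gamma, \delta \in \Sigma^*$, the hypotheses $\alpha x \beta \in \calL(A)$ and $\gamma x \delta \in \calL(A)$ force $\alpha x \delta \in \calL(A)$. I would therefore build $A'$ so that $\calL(A')$ is exactly the set of \emph{witness words} $\alpha x \delta$ (with $x \in \Sigma$) such that there exist $\beta, \gamma \in \Sigma^*$ with $\alpha x \beta \in \calL(A)$ and $\gamma x \delta \in \calL(A)$. With this choice, $\calL(A)$ is letter-Cartesian, i.e., local, if and only if every witness word is already accepted by~$A$, which is precisely the condition $\calL(A') \subseteq \calL(A)$.

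To construct $A'$ in polynomial time, I would first compute by standard graph traversal on $A$ the set $\mathrm{Reach}$ of states of $A$ reachable from some initial state, and the set $\mathrm{CoReach}$ of states from which some accepting state is reachable. The automaton $A'$ then consists of two disjoint copies of the state set of $A$, a \emph{left} copy and a \emph{right} copy; its initial states are the initial states of $A$ in the left copy, and its accepting states are the accepting states of $A$ in the right copy. Inside each copy, $A'$ simulates the transitions of $A$ letter by letter---the left copy will read $\alpha$ and the right copy will read $\delta$. On top of these, for every letter $x \in \Sigma$, I would add \emph{switch} transitions labeled $x$ from a left-copy state $p$ to a right-copy state $q'$ exactly when (i)~there is a transition $p \xrightarrow{x} q$ in $A$ with $q \in \mathrm{CoReach}$, and (ii)~there is a transition $p' \xrightarrow{x} q'$ in $A$ with $p' \in \mathrm{Reach}$.

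Condition~(i) witnesses the existence of a suffix $\beta$ giving $\alpha x \beta \in \calL(A)$, while condition~(ii) witnesses that $q'$ is the state reached in $A$ after some prefix $\gamma x$ from an initial state, so that any completion to an accepting state inside the right copy yields a word $\gamma x \delta \in \calL(A)$. A routine induction on the length of runs then shows that $\calL(A')$ equals the intended set of witness words; note in particular that $\varepsilon \notin \calL(A')$, since any accepting run must traverse one switch transition, consistent with the fact that locality imposes no condition on the empty word. The construction is visibly polynomial in the size of $A$. The only delicate point is arranging the switch transitions so that both existential quantifiers $\exists \beta$ and $\exists \gamma$ at the pivot letter $x$ are captured simultaneously; once $\mathrm{Reach}$ and $\mathrm{CoReach}$ are precomputed, the rest of the verification is straightforward.
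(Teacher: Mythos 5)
Your construction is correct: the two-copy automaton with switch transitions guarded by $\mathrm{Reach}$ and $\mathrm{CoReach}$ does recognize exactly the words $\alpha x \delta$ for which some $\beta,\gamma$ give $\alpha x \beta \in \calL(A)$ and $\gamma x \delta \in \calL(A)$, so by the letter-Cartesian characterization (Proposition~\ref{prp:equiv}) locality of $\calL(A)$ is equivalent to $\calL(A') \subseteq \calL(A)$, and the construction is clearly polynomial. The paper itself gives no proof and simply cites Lemma~B.4 of the referenced work, whose argument is this same standard construction, so your proposal matches the intended approach.
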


From this it follows that testing locality of the
language recognized by an NFA is in PSPACE, because
testing NFA inclusion is in PSPACE. Hence what remains to
show in this note is the lower bound for
Theorem~\ref{thm:intro}.

\section{Lower Bound}
\label{sec:lower}

We now prove the lower bound for
Theorem~\ref{thm:intro}, and show that it holds
already for infix-free languages. For this we use an
adaptation of the proof of Greibach's theorem.
Formally:

\begin{proposition}
  The following problem is PSPACE-hard: given as
  input an NFA $A$, determine if~$\calL(A)$ is local.
  This holds even when the input $A$ is required to ensure
  that~$\calL(A)$ is infix-free.
\end{proposition}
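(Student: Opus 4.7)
The plan is to reduce from NFA universality, which is known to be PSPACE-complete. Given an NFA $A_0$ over alphabet $\Sigma$, I construct in polynomial time an NFA $B$ over $\Sigma \cup \{\#,\$\}$ (for two fresh letters $\#$ and $\$$) such that $\calL(B)$ is infix-free and $\calL(B)$ is local if and only if $\calL(A_0) = \Sigma^*$. The construction has two stages. First, I build an NFA $A$ for the language $\calL(A) := \Sigma^{\leq 2} \cup (\Sigma^3 \cdot \calL(A_0))$, which is easily realized in PTIME; a direct check shows $\calL(A) = \Sigma^*$ iff $\calL(A_0) = \Sigma^*$. Second, I set $\calL(B) := \# \cdot \calL(A) \cdot \$$, realized by prepending a $\#$-transition and appending a $\$$-transition. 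Infix-freeness of $\calL(B)$ is immediate from $\calL(B) \subseteq \# \Sigma^* \$$ with $\#, \$ \notin \Sigma$: the two markers can only occur at the extremities of words of $\calL(B)$, which rules out any strict-infix relation.

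I then argue that $\calL(B)$ is local iff $\calL(A) = \Sigma^*$ via the letter-Cartesian characterization of \Cref{prp:equiv}. The easy direction is that if $\calL(A) = \Sigma^*$ then $\calL(B) = \# \Sigma^* \$$, which is straightforwardly letter-Cartesian. For the hard direction I assume $\calL(A) \neq \Sigma^*$, and I must exhibit a letter-Cartesian counterexample. Since $\calL(A) \supseteq \Sigma^{\leq 2}$, any word missing from $\calL(A)$ has length at least~$3$. The plan is to pick $y_0$ of minimum length in $\Sigma^* \setminus \calL(A_0)$ and arbitrary letters $a_1, a_2, a_3 \in \Sigma$, and to set $w_0 := a_1 a_2 a_3 y_0$; then $w_0 \notin \calL(A)$ because removing the three-letter prefix $a_1 a_2 a_3$ from $w_0$ leaves $y_0 \notin \calL(A_0)$. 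The witness I propose is $\alpha := \# a_1$, $x := a_2$, $\delta := a_3 y_0 \$$: then $\alpha x \delta = \# w_0 \$$ is not in $\calL(B)$, yet $\# a_1 a_2 \$$ is in $\calL(B)$ via the $\Sigma^{\leq 2}$ summand, and $\# a_2 a_3 y_0 \$$ is in $\calL(B)$ either directly when $y_0 = \varepsilon$ because $a_2 a_3 \in \Sigma^{\leq 2}$, or, when $|y_0| \geq 1$, because the word obtained from $y_0$ by removing its first letter is strictly shorter than $y_0$ and hence lies in $\calL(A_0)$ by minimality, placing $a_2 a_3 y_0$ into $\Sigma^3 \cdot \calL(A_0)$.

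The main obstacle is producing this letter-Cartesian counterexample in the hard direction, and this is what dictates the precise form of $\calL(A)$: the summand $\Sigma^{\leq 2}$ is what guarantees the ``left half'' $\# a_1 a_2 \$$ of the witness unconditionally, while the minimality of $y_0$ combined with the summand $\Sigma^3 \cdot \calL(A_0)$ guarantees the ``right half'' $\# a_2 a_3 y_0 \$$ with no additional assumption on $\calL(A_0)$ beyond the existence of the bad word $y_0$.
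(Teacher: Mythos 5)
Your proof is correct, but your gadget is genuinely different from the one in the paper. The paper also reduces from NFA universality and also concludes via the letter-Cartesian characterization (\cref{prp:equiv}), but its construction is $L' = \#_1(a^*\#_2 L + aa\#_2\Sigma^*)\#_3$ with four fresh letters: the Cartesian swap is performed at the fresh marker $\#_2$, the two summands are distinguished by the number of $a$'s preceding $\#_2$, and the two witnesses are $\#_1\#_2 u\#_3$ (for some $u\in L$, which is why the paper first assumes, harmlessly, that $L\neq\emptyset$) and $\#_1 aa\#_2 w\#_3$ (for some $w\notin L$). Your construction $\#\,(\Sigma^{\leq 2}+\Sigma^3\cdot\calL(A_0))\,\$$ instead performs the swap at an ordinary letter of $\Sigma$ and replaces the marker mechanism by a positional one: the summand $\Sigma^{\leq 2}$ unconditionally supplies one witness, and choosing a length-minimal non-member $y_0$ guarantees that stripping its first letter lands back in $\calL(A_0)$, which supplies the other. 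This buys you two fewer fresh letters and dispenses with the non-emptiness preprocessing; the minor price is the extra minimality argument and the (trivially satisfiable, but worth stating explicitly) assumption $\Sigma\neq\emptyset$ needed to pick $a_1,a_2,a_3$. Both arguments establish infix-freeness and the ``if'' direction in essentially the same way, so the two proofs differ only in the shape of the gadget and of the letter-Cartesian counterexample.
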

\begin{proof}
  We reduce from the PSPACE-hard problem of \emph{NFA
  universality}~\cite{biblio:MS72}. Recall that, in this problem, we are given as input an NFA $A$ over some
  alphabet $\Sigma$, and we must determine whether $A$ is
  universal, i.e., whether $\calL(A) = \Sigma^*$.
  Let us write $L=\calL(A)$. We can assume without
  loss of generality that $L\neq \emptyset$, because this test (the NFA
  emptiness problem) can be done in PTIME in $A$ simply by reachability
  testing.

  Let $a, \#_1, \#_2, \#_3$ be
  fresh letters not in $\Sigma$. From $A$, we build
  in PTIME another NFA $A'$ over the alphabet $\Sigma'\coloneq \Sigma \cup \{a, \#_1, \#_2, \#_3\}$ 
  for
  the language 
  $L' \coloneq \#_1(a^*\#_2 L + aa \#_2
  \Sigma^*)\#_3$.
  Observe that $L'$ is infix-free,
  thanks to the delimiters~$\#_1$ and~$\#_3$. We now
  prove that $L'$ is local if and only if $L =
  \Sigma^*$, i.e., that the reduction is correct. We
  prove both directions in turn.
  \begin{description}
    \item[“If” direction.] Assume that $L=\Sigma^*$.
      Then we have $L' = \#_1 a^* \#_2 \Sigma^*
      \#_3$, and it is easy to see (using
      Proposition~\ref{prp:equiv} for instance) that
      $L'$ is local.
    \item[“Only if” direction.] We prove the contrapositive: assume that $L \neq
      \Sigma^*$, and let us prove that $L'$ is not
      local. Since $L \neq \Sigma^*$, let $w$ be a word in $\Sigma^* \setminus
      L$. Further, since we assumed that $L \neq \emptyset$, let $u$ be a word
      in~$L$. Consider the two 
      words $w_1 \coloneq \#_1 \#_2 u \#_3$ and $w_2 \coloneq \#_1 aa \#_2 w
      \#_3$. The words $w_1$ and $w_2$ are both in~$L'$, but the word $w_3
      \coloneq \#_1
      \#_2 w \#_3$ is not in~$L'$. Therefore~$L'$ is
      not letter-Cartesian (using $x = \#_2$, $\alpha
      = \#_1$, $\beta = u \#_3$, $\gamma = \#_1 aa$,
      and $\delta = w \#_3$ in
      Definition~\ref{def:cart}), so that by
      Proposition~\ref{prp:equiv} $L'$ is not local.
      This concludes the proof.\qedhere
  \end{description}
\end{proof}

\bibliographystyle{plain}
\bibliography{main}

\vfill
\doclicenseThis

\end{document}